\pdfoutput=1

\documentclass[11pt]{article}

\usepackage{url}

\usepackage{graphicx}
%\usepackage{epstopdf}
%\epstopdfsetup{outdir=./arxivFig/,suffix=}

\usepackage{authblk}

\usepackage{subfig}

\usepackage[noend]{algorithmic}
\usepackage{algorithm}

\usepackage{amsmath}
\usepackage{amssymb}
\usepackage{amsthm}
\theoremstyle{definition}
\newtheorem{definition}{Definition}
\theoremstyle{plain}
\newtheorem{prop}{Proposition}
\newtheorem{theorem}{Theorem}

\usepackage[round]{natbib}
\usepackage{hyperref}

\newcommand{\scriptC}{\mathcal{C}}
\newcommand{\scriptM}{\mathcal{M}}
\newcommand{\scriptV}{\mathcal{V}}
\newcommand{\scriptE}{\mathcal{E}}
\newcommand{\scriptB}{\mathcal{B}}

\newcommand{\scriptN}{\mathcal{N}}
\newcommand{\R}{\mathbb{R}}
\newcommand{\rightarrowdist}{\stackrel{d}{\longrightarrow}}

\renewcommand{\vec}[1]{\boldsymbol{\mathbf{#1}}}

\DeclareMathOperator{\Var}{Var}

\hyphenation{trac-ta-ble Mar-kov}

\begin{document}

\title{Stochastic Block Transition Models for\\Dynamic Networks}
\renewcommand\Authfont{\small}
\renewcommand\Affilfont{\small}
\setlength{\affilsep}{3pt}
\author{Kevin S.~Xu}
\affil{Technicolor Research, 175 S. San Antonio Rd., Los Altos, CA 94022, USA 
\authorcr \url{kevinxu@outlook.com}}

\maketitle

\begin{abstract}
There has been great interest in recent years on 
statistical models for dynamic networks. 
In this paper, I propose a \emph{stochastic block transition model} (SBTM) 
for dynamic networks that is 
inspired by the well-known stochastic block model (SBM) for 
static networks and previous dynamic extensions of the SBM. 
Unlike most existing dynamic network models, it does \emph{not} make a 
hidden Markov 
assumption on the edge-level dynamics, allowing the presence or absence of 
edges to directly influence future edge probabilities while retaining the 
interpretability of the SBM. 
I derive an approximate inference procedure for the SBTM and 
demonstrate that it is significantly better at reproducing durations of 
edges in real social network data.
\end{abstract}

\section{Introduction}
Analysis of data in the form of networks has been a topic of 
interest across many disciplines, aided by the development of statistical 
models for networks. 
Many models have been proposed for static networks, where the data consist of 
a single observation of the network \citep{Goldenberg2009}. 
On the other hand, modeling dynamic networks is still in its infancy; much 
research on dynamic network modeling has appeared only in the past several 
years. 
Statistical models for static networks typically utilize a latent variable 
representation for the network; such models have been extended to dynamic 
networks by allowing the latent variables, which I refer to as \emph{states}, 
to evolve over time. 

This paper targets networks evolving in discrete time in which both nodes 
and edges can \emph{appear} and \emph{disappear} over time, such as dynamic 
networks of social interactions. 
Most existing dynamic network models assume a hidden Markov 
structure, where a snapshot of the network at any particular time is 
\emph{conditionally independent} from all previous snapshots given the current 
network states. 
Such an approach greatly simplifies the model and allows for tractable 
inference, but it may not be flexible enough to replicate 
certain observations from real network data, such as time durations of edges, 
which are often inaccurately reproduced by models with hidden Markov 
dynamics. 

In this paper I propose a \emph{stochastic block transition model} (SBTM) 
for dynamic networks, 
inspired by the well-known stochastic block model (SBM) for static networks. 
The approach generalizes two recent dynamic extensions of SBMs that utilize 
the hidden Markov assumption \citep{Yang2011,Xu2014a}. 
In the SBTM, the presence (or absence) of an edge between two nodes at any 
given time step \emph{directly influences} the probability that such an edge
would appear at the next time step. 

I demonstrate that, under the SBTM, the sample mean of a scaled version 
of the observed adjacency matrix at each time is 
asymptotically Gaussian. 
Taking advantage of this property, I develop an approximate 
inference procedure using a combination of an extended Kalman filter and 
a local search algorithm. 
I investigate the accuracy of the inference procedure via a simulation 
experiment. 
Finally I fit the SBTM to a real dynamic network of social interactions and 
demonstrate its ability to more accurately replicate 
edge durations while retaining the interpretability of the SBM.

\section{Related Work}
\label{sec:Related}
There has been significant research dedicated to statistical modeling of 
dynamic networks, mostly in the past several years. 
Much of the earlier work is covered in the excellent survey by 
\citet{Goldenberg2009}. 
Key contributions in this area include dynamic extensions of static network 
models including exponential random graph models \citep{Guo2007},
stochastic block models \citep{Xing2010,Ho2011,Ishiguro2010, 
Yang2011,Xu2014a},
continuous latent space models \citep{Sarkar2005,Sarkar2007, 
Hoff2011,Lee2011,Durante2013a},
and latent feature models \citep{Foulds2011,Heaukulani2013,Kim2013}. 

Several dynamic extensions of stochastic block models are related to 
this paper. 
\citet{Xing2010} and \citet{Ho2011} proposed dynamic extensions of a 
mixed-membership version of the SBM. 
\citet{Ishiguro2010} proposed a dynamic extension of the infinite relation 
model, which is a nonparametric version of the SBM. 
\citet{Yang2011} and \citet{Xu2014a} proposed dynamic extensions of the 
standard SBM; these models are closely related to the model proposed in this 
paper and are further discussed in Section \ref{sec:DSBM}. 

Most dynamic network models assume a hidden Markov structure. 
Specifically the network states follow Markovian 
dynamics, and it is assumed that a network snapshot is  
\emph{conditionally independent} of all past snapshots given the current 
states. 
While tractable, such an assumption may not be realistic in many settings, 
including dynamic networks of social interactions. 
For example, if two people interact with each other at some time, it 
may influence them to interact again in the near future. 
\citet{Viswanath2009} reported that over $80\%$ of pairs of Facebook users 
continued to interact one month after an 
initial interaction, and over $60\%$ continued after three months, suggesting 
that such an influence may be present. 

In hidden Markov dynamic network models, observing an edge influences the 
estimated probability 
of that edge re-occurring in the future only by affecting the estimated states 
corresponding to the edge, so the influence is weak. 
A \emph{stronger influence} can be incorporated by allowing the presence of a 
future edge to depend both on the current network states and 
on whether or not an edge is currently present. 
The model I propose satisfies this property. 
To the best of my knowledge, the only other dynamic network model satisfying 
this property is the latent feature propagation model proposed by 
\citet{Heaukulani2013}. 

\section{Stochastic Block Models}

\subsection{Static Stochastic Block Models}
A static network is represented by a graph over a set of nodes $\scriptV$ 
and a set of edges $\scriptE$. 
The nodes and edges are represented by a square adjacency matrix $W$, where 
an entry $w_{ij}=1$ denotes that an edge is present from node $i \in \scriptV$ 
to node $j \in \scriptV \setminus \{i\}$, and $w_{ij}=0$ denotes that no 
such edge is present. 
Unless otherwise specified, I assume directed graphs, i.e.~$w_{ij} \neq 
w_{ji}$ in general, with no self-edges, i.e.~$w_{ii}=0$. 
Let $\scriptC = \{\scriptC_1, \dots, \scriptC_k\}$ denote a partition of 
$\scriptV$ into $k$ classes. 
I use the notation $i \in a$ to denote that node $i$ belongs to class $a$. 
I represent the partition by a class membership vector $\vec{c}$, where 
$c_i = a$ is equivalent to $i \in a$.

A \emph{stochastic block model} (SBM) for a static network is defined as 
follows (adapted from Definition 3 in \citet{Holland1983}):

\begin{definition}[Stochastic block model]
\label{def:SBM}
Let $W$ denote a \emph{random} adjacency matrix for a static network, and let 
$\vec{c}$ denote a class membership vector. 
$W$ is generated according to a stochastic block model with respect 
to the membership vector $\vec{c}$ if and only if,

\begin{enumerate}
\item For any nodes $i \neq j$, the random variables $w_{ij}$ are 
statistically independent.

\item For any nodes $i \neq j$ and $i' \neq j'$, if 
$i$ and $i'$ are in the same class, i.e.~$c_i = c_{i'}$,
and $j$ and $j'$ are in the same class, i.e.~$c_j = c_{j'}$,
then the random variables 
$w_{ij}$ and $w_{i'j'}$ are identically distributed.
\end{enumerate}

\end{definition}

Let $\Theta \in [0,1]^{k \times k}$ denote the matrix of probabilities of 
forming edges between classes, which I refer to as the \emph{block 
probability matrix}.  
It follows from Definition \ref{def:SBM} and the requirement that $W$ be an 
adjacency matrix that $w_{ij} \sim 
\text{Bernoulli}(\theta_{ab})$, where $i \in a$ and $j \in b$. 

SBMs are used in both the \emph{a priori} setting, where class 
memberships are known or assumed, and the \emph{a posteriori} 
setting, where class memberships are estimated. 
Recent interest has focused on the more difficult a posteriori setting, 
which I assume in this paper.

\subsection{Dynamic Stochastic Block Models}
\label{sec:DSBM}
Consider a dynamic network evolving in discrete time steps where 
both nodes and edges could \emph{appear} or \emph{disappear} over time. 
Let $(\scriptV^t,\scriptE^t)$ denote a graph snapshot, where the superscript 
$t$ denotes the time step. 
Let $\scriptM^t$ denote a mapping from $\scriptV^t$, the set of 
nodes at time $t$, to the set of indices $\{1,\ldots,|\scriptV^t|\}$. 
Using the appropriate mapping $\scriptM^t$, one can represent a dynamic 
network using a sequence of adjacency matrices 
$W^{(T)} = \{W^1, \ldots, W^T\}$, and correspondence between rows and columns 
of different matrices can be established by inverting the mapping. 
In the remainder of this paper, I drop explicit reference to the mappings and 
assume that a node $i \in \scriptV^{t-1} \cap \scriptV^t$ is represented by 
row and column $i$ in both $W^{t-1}$ and $W^t$. 

I define a \emph{dynamic stochastic block model} for a time-evolving 
network in the following manner:

\begin{definition}[Dynamic stochastic block model]
\label{def:DSBM}
Let $W^{(T)}$ denote a \emph{random sequence} of $T$ adjacency matrices 
over the set 
of nodes $\scriptV^{(T)} = \cup_{t=1}^T \scriptV^t$, and 
let $\vec{c}^{(T)}$ denote a sequence of class membership vectors for these 
nodes. 
$W^{(T)}$ is generated according to a dynamic stochastic block model 
with respect to $\vec{c}^{(T)}$ if and only if
for each time $t$, $W^t$ is generated according to a static stochastic block 
model with respect to $\vec{c}^t$.

\end{definition}

This definition of a dynamic SBM encompasses 
dynamic extensions of SBMs previously proposed in the literature 
\citep{Yang2011,Xu2014a}, which model the sequence $W^{(T)}$ as observations 
from a hidden 
Markov-type model, where $W^t$ is conditionally independent of all past 
adjacency matrices $W^{(t-1)}$ given the parameters of the SBM at time $t$. 
I refer to these hidden Markov SBMs as HM-SBMs. 

\citet{Yang2011} proposed an HM-SBM that posits a Markov model on the class 
membership 
vectors $\vec{c}^t$ parameterized by a transition matrix that specifies the 
probability that any node in class $a$ at time $t$ switches to class $b$ at 
time $t+1$ for all $a,b,t$. 
The authors proposed an approximate inference procedure using a combination of 
Gibbs sampling and 
simulated annealing, which they refer to as probabilistic simulated 
annealing (PSA).

\citet{Xu2014a} proposed an HM-SBM that places 
a state-space model on the block probability 
matrices $\Theta^t$. 
The temporal evolution of these probabilities is governed by a linear dynamic 
system on the logits of the probabilities $\Psi^t = \log(\Theta^t 
/(1-\Theta^t))$, where the logarithms are applied entrywise. 
The authors performed approximate inference by using an extended Kalman filter 
augmented with a local search procedure, which was shown to perform 
competitively with the PSA procedure of \citet{Yang2011} in terms of accuracy 
but is about an order of magnitude faster. 

\section{Stochastic Block Transition Models}
\label{sec:SBTM}

One of the main disadvantages of using  
a hidden Markov-type approach for dynamic SBMs 
relates to the assumption that edges at time $t$ are conditionally independent 
from edges at previous times given the SBM parameters (states) at time $t$. 
Hence the probability distribution of edge durations is given by
\begin{equation*}
	\Pr(\text{duration} = d) \!=\! \left(1-\theta_{ab}^{t-1}\right) \!
		\theta_{ab}^t \cdots \theta_{ab}^{t+d-1} \!
		\left(1-\theta_{ab}^{t+d}\right)\!,
\end{equation*}
for an edge that first appeared at time $t$ and disappeared at $t+d$ 
where the nodes belong to classes $a$ and $b$ from times $t-1$ to $t+d$. 
Note that the edge durations are tied directly to the probabilities of 
forming edges at a given 
time $\theta_{ab}^t$, which control the densities of the blocks. 
Specifically, the presence or absence of an edge between two nodes at any 
particular time \emph{does not directly influence} the presence or absence of 
such an 
edge at a future time, which is undesirable in certain settings, as noted 
in Section \ref{sec:Related}. 

\subsection{Model Definition}
\label{sec:ModelDef}
I propose a dynamic network model where the edge durations are 
decoupled from the block densities, which 
allows for edges with long durations even in blocks with 
low densities. 
The main idea is as follows: for any pair of nodes $i \in a$ and $j \in b$ 
at both times $t-1$ and $t$ such that 
$w_{ij}^{t-1} = 1$, i.e.~there is an edge from $i$ to $j$ at time $t-1$, 
$w_{ij}^t$ are independent and identically distributed 
(iid). 
The same is true for $w_{ij}^{t-1} = 0$. 
Thus all edges in a block at 
time $t-1$ are equally likely to re-appear at time $t$, and 
non-edges in a block at 
time $t-1$ are equally likely to appear at time $t$. 
Since these sub-blocks are on the \emph{transitions} between time steps, I 
call this the \emph{stochastic block transition model} (SBTM). 

Let $i$ and $j$ denote nodes in classes $a$ and $b$, respectively, at both 
times $t-1$ and $t$, and define
\begin{gather}
\label{eq:TransProbNew}
\pi_{ab}^{t|0} = \Pr(w_{ij}^t=1|w_{ij}^{t-1}=0)\; \\
\label{eq:TransProbExist}
\pi_{ab}^{t|1} = \Pr(w_{ij}^t=1|w_{ij}^{t-1}=1).
\end{gather} 
Unlike in the hidden Markov SBM, where edges are formed iid with 
probabilities according 
to the block probability matrix $\Theta^t$, in the SBTM, edges are formed 
according to two block transition matrices: 
$\Pi^{t|0} = \big[\pi_{ab}^{t|0}\big]$, 
denoting the probability of \emph{forming new edges} within blocks, and 
$\Pi^{t|1} = \big[\pi_{ab}^{t|1}\big]$, 
denoting the probability of \emph{existing edges re-occurring} within blocks. 

The SBTM can accommodate nodes changing classes over time as well as 
new nodes entering the network. 
If a node was not present at time $t-1$, take its class membership at time 
$t-1$ to be $0$. 
I formally define the SBTM as follows:
\begin{definition}[Stochastic block transition model]
\label{def:SBTM}

Let $W^{(T)}$ and $\vec{c}^{(T)}$ denote the same quantities as in 
Definition \ref{def:DSBM}. 
$W^{(T)}$ is generated according to a stochastic block 
transition model with respect to $\vec{c}^{(T)}$ if and only if,

\begin{enumerate}
\item The initial adjacency matrix $W^1$ is generated according to a static 
SBM with respect to $\vec{c}^{1}$. 

\item At any given time $t$, for any nodes $i \neq j$, the random variables 
$w_{ij}^t$ are statistically independent.

\item At time $t \geq 2$, for any nodes $i \neq j$ such that 
$c_i^t = a$ and $c_j^t = b$ and for $u \in \{0,1\}$, 
\label{item:WprevEqual}
\begin{equation}
\label{eq:TransProb}
\Pr(w_{ij}^t = 1 | w_{ij}^{t-1} = u) = \xi_{ij}^t \pi_{ab}^{t|u}.
\end{equation}
\end{enumerate}

\end{definition}
The matrix of \emph{scaling factors} $\Xi^t = [\xi_{ij}^t]$ is used to scale 
the transition probabilities $\pi_{ab}^{t|0}$ and $\pi_{ab}^{t|1}$ to account 
for new nodes entering the network as well as existing nodes changing classes 
over time. 

I propose to choose the scaling factors $\xi_{ij}^t$ to satisfy the following 
properties:
\begin{enumerate}
\item If nodes $i \in a$ and $j \in b$ at both 
times $t-1$ and $t$, then $\xi_{ij}^t = 1$.
\label{item:ScaleSameClass}

\item The scaled transition probability is a valid probability, 
i.e.~$0 \leq \xi_{ij}^t \pi_{ab}^{t|u} \leq 1$ for all $i \neq j$ such that 
$c_i^t = a$, $c_j^t = b$, and $u \in \{0,1\}$.
\label{item:ScaleValid}

\item The marginal distribution of the adjacency matrix $W^t$ should follow a 
static SBM.
\label{item:ScaleSbm}
\end{enumerate}
Property \ref{item:ScaleSameClass} follows from the definition of the 
transition probabilities \eqref{eq:TransProbNew} and 
\eqref{eq:TransProbExist}. 
Property \ref{item:ScaleValid} ensures that the SBTM is a valid model. 
Finally, property \ref{item:ScaleSbm} provides the connection to the 
static SBM. 

\subsection{Derivation of Scaling Factors}
I derive an expression for the scaling factors that satisfies each of 
the three properties. 
Consider two nodes $i \in a'$ and $j \in b'$ 
at time $t-1$ and $i \in a$ and $j \in b$ at time $t$. 
Begin with the case where $a' = 0$ or $b' = 0$, indicating that either 
node $i$ or $j$, respectively, was not present at time $t-1$.
For this case, $w_{ij}^{t-1}=0$ so
\begin{equation*}
\Pr(w_{ij}^t = 1) = \Pr(w_{ij}^t = 1 | w_{ij}^{t-1} = 0) 
	= \xi_{ij}^t \pi_{ab}^{t|0}
\end{equation*}
Property \ref{item:ScaleSameClass} does not apply. 
In order for property \ref{item:ScaleSbm} to hold, $\Pr(w_{ij}^t=1)$ must 
be equal to $\theta_{ab}^t$. 
Thus $\xi_{ij}^t = \theta_{ab}^t / \pi_{ab}^{t|0}$. 
Note that this also satisfies property \ref{item:ScaleValid} because 
$\theta_{ab}^t$ is a valid probability. 

Next consider the case where $a',b' \neq 0$, i.e.~both nodes were present at 
the previous time. 
Then
\begin{align}
&\Pr(w_{ij}^t = 1) \nonumber\\
=\; &\Pr(w_{ij}^t = 1 | w_{ij}^{t-1} = 0) \Pr(w_{ij}^{t-1} = 0) 
	\nonumber\\
\label{eq:LawTotalProb}
&\qquad\qquad\qquad + \Pr(w_{ij}^t = 1 | w_{ij}^{t-1} = 1) \Pr(w_{ij}^{t-1} = 1) \\
\label{eq:MargProbGen}
=\; &\xi_{ij}^{t|0} \pi_{ab}^{t|0} (1-\theta_{a'b'}^{t-1}) + 
	\xi_{ij}^{t|1} \pi_{ab}^{t|1} \theta_{a'b'}^{t-1},
\end{align}
where \eqref{eq:MargProbGen} follows from substituting \eqref{eq:TransProb} 
into \eqref{eq:LawTotalProb} and by letting the scaling factor
\begin{equation}
\xi_{ij}^t = 
\begin{cases}
\xi_{ij}^{t|0}, & \text{if } w_{ij}^{t-1} = 0 \\
\xi_{ij}^{t|1}, & \text{if } w_{ij}^{t-1} = 1
\end{cases}.
\end{equation} 
According to property \ref{item:ScaleSbm}, $\Pr(w_{ij}^t = 1) 
= \theta_{ab}^t$. 
Hence one must choose the scaling factor $\xi_{ij}^t$ such that this is 
the case. 
If $a = a'$ and $b = b'$, i.e. neither node changed class between time steps, 
then $\xi_{ij}^t = 1$ from property \ref{item:ScaleSameClass}, so 
\eqref{eq:MargProbGen} becomes 
\begin{equation}
\label{eq:ThetaRecurs}
\theta_{ab}^t = \pi_{ab}^{t|0} (1-\theta_{ab}^{t-1}) + \pi_{ab}^{t|1} 
	\theta_{ab}^{t-1}.
\end{equation}

For the general case where $a \neq a'$ or $b \neq b'$, 
I first identify a range of choices for the scaling factor 
$\xi_{ij}^t$ that satisfy properties \ref{item:ScaleValid} and 
\ref{item:ScaleSbm}, then I select a particular choice that satisfies 
property \ref{item:ScaleSameClass}. 
Property \ref{item:ScaleValid} implies the following inequalities:
\begin{gather}
\label{eq:ScaleNewIneq}
0 \leq \xi_{ij}^{t|0} \leq 1/\pi_{ab}^{t|0} \; \\
\label{eq:ScaleExistIneq}
0 \leq \xi_{ij}^{t|1} \leq 1/\pi_{ab}^{t|1}.
\end{gather}
Meanwhile property \ref{item:ScaleSbm} implies that
\begin{equation}
\label{eq:ThetaRecursGen}
\theta_{ab}^t = \xi_{ij}^{t|0} \pi_{ab}^{t|0} (1-\theta_{a'b'}^{t-1}) + 
	\xi_{ij}^{t|1} \pi_{ab}^{t|1} \theta_{a'b'}^{t-1}.
\end{equation}
Re-arrange \eqref{eq:ThetaRecursGen} to isolate $\xi_{ij}^{t|1}$ and 
substitute into \eqref{eq:ScaleExistIneq} to obtain 
\begin{equation}
\label{eq:ScaleNewIneq2}
\frac{\theta_{ab}^t - \theta_{a'b'}^{t-1}} {\pi_{ab}^{t|0} 
	(1-\theta_{a'b'}^{t-1})} \leq \xi_{ij}^{t|0} \leq \frac{\theta_{ab}^t} 
	{\pi_{ab}^{t|0} (1-\theta_{a'b'}^{t-1})}.
\end{equation}
Combine \eqref{eq:ScaleNewIneq}, \eqref{eq:ThetaRecursGen}, and 
\eqref{eq:ScaleNewIneq2} to arrive at necessary conditions on 
$\pi_{ab}^{t|0}$ in order to satisfy properties \ref{item:ScaleValid} and 
\ref{item:ScaleSbm}:
\begin{equation}
\label{eq:ScaleNewNecess}
\alpha(a',b') \leq \xi_{ij}^{t|0} \leq \beta(a',b'),
\end{equation}
where the upper and lower bounds are functions of $a'$ and $b'$, the 
classes for $i$ and $j$, respectively, at time $t-1$ and are given by
\begin{gather}
\label{eq:LBoundGen}
\alpha(a',b') = \max\left(0,\frac{\theta_{ab}^t - \theta_{a'b'}^{t-1}} 
	{\pi_{ab}^{t|0} (1-\theta_{a'b'}^{t-1})}\right) \\
\label{eq:UBoundGen}
\beta(a',b') = \min\left(\frac{1}{\pi_{ab}^{t|0}},\frac{\theta_{ab}^t} 
	{\pi_{ab}^{t|0} (1-\theta_{a'b'}^{t-1})}\right)
\end{gather}
From \eqref{eq:ScaleNewNecess}--\eqref{eq:UBoundGen}, it follows that 
\begin{equation}
\label{eq:ScaleNewSolRange}
\xi_{ij}^{t|0} = \alpha(a',b') + \frac{\beta(a',b') - \alpha(a',b')}
	{\gamma(a',b')}
\end{equation}
is a valid solution for any $\gamma(a',b') \geq 1$.

In order to satisfy property \ref{item:ScaleSameClass} as well, 
$\xi_{ij}^{t|0}$ must equal $1$ if $a' = a$ and $b' = b$, i.e.~neither 
node changed class between time steps. 
This is accomplished by choosing 
\begin{equation}
\label{eq:ScaleNewDenom}
\gamma(a',b') = \frac{\beta(a,b) - \alpha(a,b)} {1 - \alpha(a,b)}.
\end{equation}
Notice that the arguments in $\alpha(\cdot)$ and $\beta(\cdot)$ are the 
current classes $a$ and $b$, regardless of the previous classes. 

The assignment for $\xi_{ij}^{t|0}$ is thus obtained by 
substituting \eqref{eq:ScaleNewDenom} into \eqref{eq:ScaleNewSolRange}. 
This value can then be substituted into \eqref{eq:ThetaRecursGen} to obtain 
the assignment for $\xi_{ij}^{t|1}$. 

\begin{prop}
\label{prop:ScaleFac}
The scaling factor assignment given by \eqref{eq:ThetaRecursGen}, 
\eqref{eq:ScaleNewSolRange}, and \eqref{eq:ScaleNewDenom} satisfies the 
three properties specified in Section \ref{sec:ModelDef}.
\end{prop}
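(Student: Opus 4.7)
The plan is to verify each of the three properties of Section \ref{sec:ModelDef} for the constructed scaling factors, essentially by tracing back how each defining equation was set up to enforce a particular property. The argument splits naturally into the two regimes used in the derivation: the ``node absent at $t-1$'' case ($a'=0$ or $b'=0$) and the general ``both nodes present'' case.

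For the absent-node case, $\xi_{ij}^t = \theta_{ab}^t/\pi_{ab}^{t|0}$ is a single scalar, property \ref{item:ScaleSameClass} does not apply, property \ref{item:ScaleSbm} is immediate from $\Pr(w_{ij}^t=1) = \xi_{ij}^t \pi_{ab}^{t|0}$, and property \ref{item:ScaleValid} follows from $\theta_{ab}^t \in [0,1]$. For the general case, I would verify property \ref{item:ScaleSbm} first: by construction, the right-hand side of \eqref{eq:ThetaRecursGen} is the law-of-total-probability expansion of $\Pr(w_{ij}^t=1)$, which \eqref{eq:ThetaRecursGen} sets equal to $\theta_{ab}^t$. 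Next, for property \ref{item:ScaleSameClass}, I would evaluate \eqref{eq:ScaleNewSolRange} at $a'=a$, $b'=b$: the choice \eqref{eq:ScaleNewDenom} forces $\xi_{ij}^{t|0} = \alpha(a,b) + (1-\alpha(a,b)) = 1$, and plugging $\xi_{ij}^{t|0}=1$ into \eqref{eq:ThetaRecursGen} reduces it to the same-class recursion \eqref{eq:ThetaRecurs}, which in turn forces $\xi_{ij}^{t|1} = 1$ (the degenerate case $\pi_{ab}^{t|1}\theta_{ab}^{t-1}=0$ is trivial).

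The main work lies in property \ref{item:ScaleValid}. I would show that \eqref{eq:ScaleNewSolRange} satisfies $\alpha(a',b') \leq \xi_{ij}^{t|0} \leq \beta(a',b')$, which together with \eqref{eq:ScaleNewIneq} yields $\xi_{ij}^{t|0}\pi_{ab}^{t|0} \in [0,1]$, and then observe that by the derivation of \eqref{eq:ScaleNewIneq2}, the same bracketing is equivalent via \eqref{eq:ThetaRecursGen} to $\xi_{ij}^{t|1}\pi_{ab}^{t|1} \in [0,1]$. The key substep is showing $\gamma(a',b') \geq 1$, which reduces via \eqref{eq:ScaleNewDenom} to $\beta(a,b) \geq 1$: the first argument of the $\min$ defining $\beta(a,b)$ is at least $1$ since $\pi_{ab}^{t|0} \in [0,1]$, and the second is at least $1$ because the same-class recursion \eqref{eq:ThetaRecurs} gives $\theta_{ab}^t \geq \pi_{ab}^{t|0}(1-\theta_{ab}^{t-1})$.

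The main obstacle is the case analysis implicit in the $\max$ and $\min$ inside $\alpha(\cdot)$ and $\beta(\cdot)$: one must check that the interval $[\alpha(a',b'),\beta(a',b')]$ is nonempty in every branch, so that \eqref{eq:ScaleNewSolRange} is actually in range and the derived $\xi_{ij}^{t|1}$ remains in $[0,1/\pi_{ab}^{t|1}]$. Once $\gamma(a',b') \geq 1$ is established and non-emptiness is dispatched branch-by-branch, the rest is direct algebraic substitution into \eqref{eq:ThetaRecursGen}.
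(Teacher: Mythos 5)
Your proposal is correct and follows essentially the same route as the paper: property \ref{item:ScaleSameClass} by direct evaluation giving $\xi_{ij}^{t|0}=1$ and then $\xi_{ij}^{t|1}=1$ via \eqref{eq:ThetaRecurs}, and properties \ref{item:ScaleValid} and \ref{item:ScaleSbm} by reducing to $\gamma(a',b')\geq 1$, i.e.\ $\beta(a,b)\geq 1$, with both arguments of the $\min$ bounded below by $1$ using \eqref{eq:ThetaRecurs}. Your extra care about non-emptiness of $[\alpha(a',b'),\beta(a',b')]$ is a point the paper leaves implicit, but it checks out routinely, so the two arguments are substantively the same.
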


\begin{proof}
Begin with property \ref{item:ScaleSameClass}. 
Let $i \in a$ and $j \in b$ at both times $t-1$ and $t$. 
From \eqref{eq:ScaleNewSolRange} and \eqref{eq:ScaleNewDenom},
\begin{align}
\xi_{ij}^{t|0} &= \alpha(a,b) + \frac{\beta(a,b) - \alpha(a,b)}{\gamma(a,b)} 
	\nonumber\\
&= \alpha(a,b) + (1 - \alpha(a,b)) \nonumber\\
\label{eq:ScaleNew1}
&= 1.
\end{align}
Substituting \eqref{eq:ScaleNew1} and \eqref{eq:ThetaRecurs} into 
\eqref{eq:ThetaRecursGen},
\begin{equation*}
\xi_{ij}^{t|1} = \frac{\pi_{ab}^{t|0}(\theta_{ab}^{t-1}-1)} {\pi_{ab}^{t|1} 
	\theta_{ab}^{t-1}} + \frac{\pi_{ab}^{t|0} 
	(1-\theta_{ab}^{t-1}) + \pi_{ab}^{t|1} \theta_{ab}^{t-1}} {\pi_{ab}^{t|1} 
	\theta_{ab}^{t-1}} = 1.
\end{equation*}
Thus property \ref{item:ScaleSameClass} is satisfied.

From the derivation of the scaling factor assignment, it was shown that 
properties \ref{item:ScaleValid} and \ref{item:ScaleSbm} are satisfied 
provided $\gamma(a',b') \geq 1$ for all $(a',b')$. 
From \eqref{eq:ScaleNewDenom}, this is true if and only if $\beta(a,b) 
\geq 1$ for all $(a,b)$. 
From \eqref{eq:UBoundGen}, $\beta(a,b) \geq 1/\pi_{ab}^{t|0} \geq 1$
because $\pi_{ab}^{t|0}$ is a probability and hence must be between $0$ and 
$1$, and
\begin{equation*}
\beta(a,b) \geq \frac{\theta_{ab}^t}{\pi_{ab}^{t|0} (1-\theta_{ab}^{t-1})} 
= 1 + \frac{\pi_{ab}^{t|1} \theta_{ab}^{t-1}}{\pi_{ab}^{t|0} 
(1-\theta_{ab}^{t-1})} \geq 1,
\end{equation*}
where the equality follows from \eqref{eq:ThetaRecurs}, and the final 
inequality results from $\pi_{ab}^{t|0}$, $\pi_{ab}^{t|1}$, and 
$\theta_{ab}^{t-1}$ all being probabilities and hence between $0$ and $1$. 
Thus properties \ref{item:ScaleValid} and \ref{item:ScaleSbm} are also 
satisfied.
\end{proof}

\begin{prop}
\label{prop:SbtmDynSbm}
An SBTM with respect to $\vec{c}^{(T)}$ 
satisfying such an assumption is a dynamic SBM; that is, any 
sequence $W^{(T)}$ generated by the SBTM also satisfies 
the requirements of a dynamic SBM.
\end{prop}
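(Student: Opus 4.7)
The plan is to proceed by induction on the time index $t$, showing that for each $t$, $W^t$ is generated according to a static SBM with respect to $\vec{c}^t$, which is exactly what Definition~\ref{def:DSBM} requires. The base case $t=1$ is immediate from item~1 of Definition~\ref{def:SBTM}.

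For the inductive step, assume that $W^{t-1}$ satisfies the static SBM properties with respect to $\vec{c}^{t-1}$, so that $\Pr(w_{ij}^{t-1}=1) = \theta_{a'b'}^{t-1}$ whenever $i \in a'$ and $j \in b'$ at time $t-1$. Two conditions must then be verified for $W^t$. Pairwise independence of the entries $w_{ij}^t$ for $i \neq j$ follows immediately from item~2 of Definition~\ref{def:SBTM}. For the identical-distribution property, since each $w_{ij}^t$ is Bernoulli it suffices to show that $\Pr(w_{ij}^t = 1)$ depends only on the current class pair $(a,b)$ of $(i,j)$ at time $t$ and not on the specific pair.

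To compute this marginal I would apply the law of total probability as in \eqref{eq:LawTotalProb}, substitute the SBTM transition rule \eqref{eq:TransProb}, and use the inductive hypothesis to replace $\Pr(w_{ij}^{t-1}=1)$ by $\theta_{a'b'}^{t-1}$. The resulting expression is precisely the left-hand side of the marginalization identity used to define the scaling factors, and property~3 of those scaling factors---established in Proposition~\ref{prop:ScaleFac}---yields $\Pr(w_{ij}^t = 1) = \theta_{ab}^t$ uniformly across all pairs with current classes $(a,b)$.

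The main obstacle, such as it is, is the case analysis induced by the piecewise definition of $\xi_{ij}^t$: (i) at least one of $i,j$ was absent at time $t-1$, handled via $\xi_{ij}^t = \theta_{ab}^t/\pi_{ab}^{t|0}$ so that the marginal collapses to $\theta_{ab}^t$ by a one-line computation; (ii) both were present and neither changed class, in which case $\xi_{ij}^t = 1$ and \eqref{eq:ThetaRecurs} directly supplies the marginal; and (iii) both were present and at least one changed class, the general situation governed by \eqref{eq:ThetaRecursGen}, \eqref{eq:ScaleNewSolRange}, and \eqref{eq:ScaleNewDenom}. Each case must be written out, but in each the scaling factor was constructed precisely so that $\Pr(w_{ij}^t = 1) = \theta_{ab}^t$, so the verification reduces to citing the scaling-factor derivation rather than performing any genuinely new calculation.
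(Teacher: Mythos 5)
Your proof is correct and takes essentially the same route as the paper, which disposes of the proposition in one sentence by citing property~3 of the scaling factors as established in Proposition~\ref{prop:ScaleFac}. Your explicit induction on $t$ simply supplies the detail the paper leaves implicit---the scaling-factor derivation presupposes $\Pr(w_{ij}^{t-1}=1)=\theta_{a'b'}^{t-1}$, which is exactly your inductive hypothesis---so yours is a more careful writeup of the same argument rather than a different one.
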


Proposition \ref{prop:SbtmDynSbm} holds trivially from property 
\ref{item:ScaleSbm}, which is satisfied due to Proposition 
\ref{prop:ScaleFac}. 
Both the SBTM and HM-SBM are dynamic SBMs; the main difference between the two 
is that, 
under the SBTM, the presence or absence of an edge between two nodes at a 
particular time \emph{does} affect the presence or absence of such an edge at 
a future time as indicated by \eqref{eq:TransProb}.

\subsection{State Dynamics}
The SBTM, as defined in Definition \ref{def:SBTM}, does not specify the 
model governing the dynamics of the sequence of adjacency matrices $W^{(T)}$ 
aside from the dependence of $W^t$ on $W^{t-1}$ specified in requirement 
\ref{item:WprevEqual}. 
To complete the model, I use a linear dynamic system on the logits of the 
probabilities, similar to \citet{Xu2014a}. 
Unlike \citet{Xu2014a}, however, the states of the system would be the logits 
of the block transition matrices $\Pi^{t|0}$ and $\Pi^{t|1}$. 

Let $\vec{x}$ denote the vectorized equivalent of a matrix $X$, obtained by 
stacking columns on top of one another, so that $\vec{\pi}^{t|0}$ and 
$\vec{\pi}^{t|1}$ are the vectorized equivalents of $\Pi^{t|0}$ and 
$\Pi^{t|1}$, respectively. 
The states of the system can then be expressed as a vector 
\begin{equation}
	\label{eq:StateDef}
	\vec{\psi}^t = 
	\begin{bmatrix}
	\log(\vec{\pi}^{t|0}/(1-\vec{\pi}^{t|0})) \\
	\log(\vec{\pi}^{t|1}/(1-\vec{\pi}^{t|1}))
	\end{bmatrix},
\end{equation}
resulting in the dynamic linear system
\begin{equation}
	\label{eq:LinearSys}
	\vec{\psi}^t = F^t \vec{\psi}^{t-1} + \vec{v}^t,
\end{equation}
where $F^t$ is the state transition model applied to the previous state, and 
$\vec{v}^t$ is a random vector of zero-mean Gaussian entries, 
commonly referred to as process noise, with covariance matrix $\Gamma^t$. 
Note that \eqref{eq:LinearSys} is the same dynamic system equation as in 
\citet{Xu2014a}, only with a different definition 
\eqref{eq:StateDef} for the state vector. 

\section{Model Inference}
\label{sec:Inference}

\subsection{Asymptotic Distribution of Observations}
The inference procedure for the dynamic SBM of \citet{Xu2014a} 
utilized a Central Limit Theorem (CLT) approximation for the block densities, 
which 
are scaled sums of independent, identically distributed Bernoulli random 
variables $w_{ij}^t$. 
Such an approach cannot be used for the SBTM because blocks no longer consist 
of identically distributed variables $w_{ij}^t$ due to the dependency 
between $W^t$ and $W^{t-1}$. 
Furthermore, the presence of the scaling factors $\xi_{ij}^t$ 
in the transition probabilities \eqref{eq:TransProb} ensure that 
$w_{ij}^t$ are not identically distributed even after conditioning on 
$w_{ij}^{t-1}$. 

I show, however, that the sample mean of a scaled version of the 
adjacencies, is asymptotically Gaussian. 
For $a,b \in \{1,\ldots,k\}$ and $u \in \{0,1\}$, let
\begin{equation*}
\scriptB_{ab}^{t|u} = \{(i,j): i \neq j, c_i^t = a, c_j^t = b, w_{ij}^{t-1} 
	= u\}.
\end{equation*}
Note that 
$\scriptB_{ab}^{t|0}$ denotes the set of non-edges in block $(a,b)$ 
at time $t-1$, which is 
also the set of \emph{possible new edges} at time $t$, and 
$\scriptB_{ab}^{t|1}$ denotes the set of edges in block $(a,b)$ at time 
$t-1$, which is also the set of \emph{possible re-occurring edges} at time 
$t$. 
Let 
\begin{equation*}
m_{ab}^{t|u} = \sum_{(i,j) \in \scriptB_{ab}^{t|u}} \frac{w_{ij}^t} 
	{\xi_{ij}^t}
\end{equation*}
and $n_{ab}^{t|u} = \big|\scriptB_{ab}^{t|u}\big|$.
$m_{ab}^{t|0}$ and $m_{ab}^{t|1}$ denote the scaled number of new and  
re-occurring edges, respectively, within block $(a,b)$ at time $t$, while 
$n_{ab}^{t|0}$ and $n_{ab}^{t|1}$ denote the number of \emph{possible} new 
and re-occurring edges, respectively. 
The following theorem shows that the sample mean of the scaled adjacencies 
within $\scriptB_{ab}^{t|u}$ is asymptotically Gaussian as the block 
size increases. 

\begin{theorem}
\label{thm:AsyGauss}
The sample mean of the scaled adjacencies
\begin{equation*}
\frac{m_{ab}^{t|u}}{n_{ab}^{t|u}} = \frac{1}{n_{ab}^{t|u}} \sum_{(i,j) \in 
	\scriptB_{ab}^{t|u}} \frac{w_{ij}^t}{\xi_{ij}^t} 
	\rightarrow\scriptN\left(\pi_{ab}^{t|u}, \left(\frac{s_{ab}^{t|u}} 
	{n_{ab}^{t|u}} \right)^2\right)
\end{equation*}
in distribution as $n_{ab}^{t|u} \rightarrow \infty$, where
\begin{equation}
s_{ab}^{t|u}
= \left[\pi_{ab}^{t|u} \sum_{(i,j) \in \scriptB_{ab}^{t|u}} \frac{1}{\xi_{ij}^t} 
	- n_{ab}^{t|u} \left(\pi_{ab}^{t|u}\right)^2\right]^{1/2}.
\label{eq:CondVar}
\end{equation}
\end{theorem}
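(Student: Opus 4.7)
The plan is to treat the sum $m_{ab}^{t|u}$ as a sum of independent but non-identically distributed bounded random variables and apply a Lindeberg--Feller (or Lyapunov) CLT. The non-identical distribution arises purely from the scaling factors $\xi_{ij}^t$, which differ across $(i,j)$ depending on whether the nodes changed class between $t-1$ and $t$; conditional on $W^{t-1}$ and $\vec{c}^{(T)}$, Definition \ref{def:SBTM}(2) tells us that the $w_{ij}^t$ are independent across $(i,j)$.

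First I would compute the mean and variance of each summand. For $(i,j) \in \scriptB_{ab}^{t|u}$, equation \eqref{eq:TransProb} gives $w_{ij}^t \mid w_{ij}^{t-1}=u \sim \mathrm{Bernoulli}(\xi_{ij}^t \pi_{ab}^{t|u})$, so $w_{ij}^t/\xi_{ij}^t$ has mean $\pi_{ab}^{t|u}$ and variance
\begin{equation*}
\frac{\xi_{ij}^t \pi_{ab}^{t|u}\bigl(1 - \xi_{ij}^t \pi_{ab}^{t|u}\bigr)}{(\xi_{ij}^t)^2}
= \frac{\pi_{ab}^{t|u}}{\xi_{ij}^t} - \bigl(\pi_{ab}^{t|u}\bigr)^2.
\end{equation*}
Summing over $(i,j) \in \scriptB_{ab}^{t|u}$ and using independence gives $\mathbb{E}[m_{ab}^{t|u}] = n_{ab}^{t|u}\pi_{ab}^{t|u}$ and $\Var(m_{ab}^{t|u}) = (s_{ab}^{t|u})^2$, matching the claimed mean and variance of the asymptotic Gaussian after dividing by $n_{ab}^{t|u}$.

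Next I would invoke a CLT for triangular arrays of independent, non-identically distributed random variables. Lyapunov's condition with $\delta=1$ is the natural choice: each $w_{ij}^t/\xi_{ij}^t$ lies in $[0, 1/\xi_{ij}^t]$, so the centered variables are uniformly bounded provided $\xi_{ij}^t$ stays bounded away from zero (which the valid-probability constraint \eqref{eq:ScaleValid} suggests, and which one should state as a regularity assumption). Under such a lower bound $\xi_{ij}^t \geq \xi_{\min} > 0$, one has $\sum_{(i,j)} \mathbb{E}\bigl|w_{ij}^t/\xi_{ij}^t - \pi_{ab}^{t|u}\bigr|^3 = O(n_{ab}^{t|u})$ while $(s_{ab}^{t|u})^2 = \Theta(n_{ab}^{t|u})$, so the Lyapunov ratio decays like $(n_{ab}^{t|u})^{-1/2} \to 0$. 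Dividing $m_{ab}^{t|u}$ by $n_{ab}^{t|u}$ rather than $s_{ab}^{t|u}$ then yields the stated limit law.

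\textbf{Main obstacle.} The routine calculations are straightforward; the delicate point is verifying the CLT's regularity assumption, namely that the scaling factors $\xi_{ij}^t$ do not degenerate (i.e., are uniformly bounded below by a positive constant and above by a finite one as the network grows). This requires inspecting the formulas \eqref{eq:LBoundGen}--\eqref{eq:ScaleNewDenom} and arguing that the block probabilities $\theta^t_{ab}$ and transition probabilities $\pi_{ab}^{t|u}$ stay bounded away from the endpoints $\{0,1\}$ as $n_{ab}^{t|u} \to \infty$. Once that is granted, the Lyapunov condition follows immediately and the CLT delivers the conclusion.
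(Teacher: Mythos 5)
Your proposal is correct and takes essentially the same route as the paper: the identical Bernoulli conditional-variance computation gives $(s_{ab}^{t|u})^2$, and the Lyapunov CLT for independent non-identically distributed summands finishes the argument (the paper uses $\delta=2$ where you use $\delta=1$, which is immaterial). The non-degeneracy issue you flag as the main obstacle is handled in the paper by observing that $\xi_{ij}^t$ depends on $(i,j)$ only through the class labels $a,b,a',b' \in \{0,1,\ldots,k\}$ and hence takes only finitely many distinct values, so the moments are uniformly bounded and $(s_{ab}^{t|u})^2 = \Theta(n_{ab}^{t|u})$.
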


\begin{proof}
The scaled adjacencies $w_{ij}^t/\xi_{ij}^t$ are independent, but not 
identically distributed, so the classical CLT no longer applies. 
However, the Lyapunov CLT can be applied provided Lyapunov's condition 
is satisfied \citep{Billingsley1995}. 
Let $\Var_u(\cdot)$ denote the conditional variance $\Var(\cdot 
| w_{ij}^{t-1}=u)$. 
The conditional variance of the scaled adjacencies is given by 
\begin{equation*}
\Var_u\left(\frac{w_{ij}^t}{\xi_{ij}^t}\right) 
= \left(\frac{1}{\xi_{ij}^t} \right)^2 \left(\xi_{ij}^t \pi_{ab}^{t|u}\right) 
	\left(1-\xi_{ij}^t \pi_{ab}^{t|u}\right) 
= \frac{\pi_{ab}^{t|u}} {\xi_{ij}^t} - \big(\pi_{ab}^{t|u}\big)^2.
\end{equation*}
Thus
\begin{equation*}
\sum_{(i,j) \in \scriptB_{ab}^{t|u}} 
	\!\!\Var_u\left(\frac{w_{ij}^t}{\xi_{ij}^t}\right) 
\,=\, \pi_{ab}^{t|u} \sum_{(i,j) \in \scriptB_{ab}^{t|u}} \frac{1}{\xi_{ij}^t} 
	- n_{ab}^{t|u} \left(\pi_{ab}^{t|u}\right)^2 = \big(s_{ab}^{t|u}\big)^2,
\end{equation*}
where $s_{ab}^{t|u}$ was defined in \eqref{eq:CondVar}. 
In this setting, Lyapunov's condition specifies that for some $\delta > 0$,
\begin{equation*}
\lim_{n_{ab}^{t|u} \rightarrow \infty} \frac{1}{\big(s_{ab}^{t|u}\big) 
	^{2+\delta}} \sum_{(i,j) \in \scriptB_{ab}^{t|u}} \!\! E_u\left[\left| 
	\frac{w_{ij}^t}{\xi_{ij}^t} - \pi_{ab}^{t|u}\right|^{2+\delta} \right] 
	= 0,
\end{equation*}
where $E_u[\cdot]$ denotes the conditional expectation
$E[\cdot | w_{ij}^{t-1}=u]$. 

I demonstrate that Lyapunov's condition is satisfied for $\delta=2$. 
First note that, although there are an infinite number of terms in the 
summation (in the limit), there are a finite number of unique terms. 
Specifically $w_{ij}^t \in \{0,1\}$, and $\xi_{ij}^t$ depends only on $i,j$ 
through their current and previous class memberships $a$, $b$, $a'$, and $b'$, 
which are all in $\{0, 1, \ldots, k\}$. 
Hence
\begin{align}
\label{eq:CondMean}
&\,\frac{1}{\big(s_{ab}^{t|u}\big)^4} \sum_{(i,j) \in \scriptB_{ab}^{t|u}} 
	\!\! E_u\left[\left(\frac{w_{ij}^t}{\xi_{ij}^t} - \pi_{ab}^{t|u} 
	\right)^4 \right] \\
\leq &\,\frac{n_{ab}^{t|u}}{\big(s_{ab}^{t|u}\big)^4} \max_{(i,j) \in 
	\scriptB_{ab}^{t|u}} E_u\left[\left(\frac{w_{ij}^t}{\xi_{ij}^t} - 
	\pi_{ab}^{t|u} \right)^4 \right] \nonumber\\
%= &\frac{O\big(n_{ab}^{t|u}\big)}{O\Big(\big(n_{ab}^{t|u}\big)^2\Big)} \\
= &\,\frac{1}{O\big(n_{ab}^{t|u}\big)} \nonumber,
\end{align}
where the last equality follows from \eqref{eq:CondVar}. 
Thus \eqref{eq:CondMean} approaches $0$ as $n_{ab}^{t|u} \rightarrow \infty$, 
and Lyapunov's condition is satisfied. 
The Lyapunov CLT states that
\begin{equation*}
\frac{1}{s_{ab}^{t|u}} \sum_{(i,j) \in \scriptB_{ab}^{t|u}} \left( 
	\frac{w_{ij}^t}{\xi_{ij}^t} - \pi_{ab}^{t|u}\right) \rightarrowdist 
	\scriptN(0,1) \\
%\frac{1}{n_{ab}^{t|u}} \sum_{(i,j) \in \scriptB_{ab}^{t|u}} \frac{w_{ij}^t} 
%	{\xi_{ij}^t} &\rightarrowdist \scriptN \left(\pi_{ab}^{t|u}, 
%	\left(\frac{s_{ab}^{t|u}}{n_{ab}^{t|u}}\right)^2 \right),
\end{equation*}
where $\rightarrowdist$ denotes convergence in distribution.
By rearranging terms one obtains the desired result.
\end{proof}

\subsection{State-space Model Formulation}
Theorem \ref{thm:AsyGauss} shows that the sample means $m_{ab}^{t|u} 
/n_{ab}^{t|u}$ are asymptotically Gaussian. 
Assume  they are indeed Gaussian. 
Stack these entries to form the observation vector
\begin{align}
	\vec{y}^t &= 
		\begin{bmatrix}
		\displaystyle \frac{m_{11}^{t|0}} {n_{11}^{t|0}} \;
		\cdots \;
		\frac{m_{kk}^{t|0}} {n_{kk}^{t|0}} \;
		\frac{m_{11}^{t|1}} {n_{11}^{t|1}} \;
		\cdots \;
		\frac{m_{kk}^{t|1}} {n_{kk}^{t|1}}
		\end{bmatrix}^T \nonumber\\
	\label{eq:Obs_model_SBM}
	&= h\left(\vec{\psi}^t\right) + \vec{z}^t,
\end{align}
where the function $h: \R^{2k^2} \rightarrow \R^{2k^2}$ is defined by 
\begin{equation}
	\label{eq:Logistic_fn}
	h_i(\vec{x}) = 1/(1+e^{-x_i}), 
\end{equation}
i.e.~the logistic sigmoid applied to each entry of $\vec{x}$, 
$\vec{\psi}^t$ was defined in \eqref{eq:StateDef}, and 
$\vec{z}^t \sim \scriptN (\vec{0},\Sigma^t)$, where $\Sigma^t$ is a 
diagonal matrix with entries given by $\big(s_{ab}^{t|u} / n_{ab}^{t|u}
\big)^2$. 

Equations \eqref{eq:LinearSys} and \eqref{eq:Obs_model_SBM} 
form a non-linear (due to the logistic function $h(\cdot)$) dynamic system 
with zero-mean Gaussian 
observation and process noise terms $\vec{z}^t$ and $\vec{v}^t$, respectively. 
Assume that the initial state is also Gaussian, 
i.e.~$\vec{\psi}^1 \sim \scriptN\left(\vec{\mu}^1, \Gamma^1\right)$, and that 
$\{\vec{\psi}^1, \vec{v}^2, 
\ldots, \vec{v}^t, \vec{z}^2, \ldots, \vec{z}^t\}$ are mutually independent. 
If \eqref{eq:Obs_model_SBM} was linear, then the optimal estimate 
for $\vec\psi^t$ given 
observations $\vec{y}^{(t)}$ in terms 
of minimum mean-squared error and maximum a posteriori probability (MAP) 
would be given by the Kalman filter. 
Due to the non-linearity, I apply the extended Kalman filter (EKF), which 
linearizes the dynamics about the predicted state and results in a 
\emph{near-optimal} estimate (in the MAP sense) when the estimation errors are 
small enough to make the linearization accurate. 
The EKF was used for inference in systems of the form of 
\eqref{eq:LinearSys} and \eqref{eq:Obs_model_SBM} in \citet{Xu2014a}.

\subsection{Inference Procedure}
Once the vector of sample means $\vec{y}^t$ is obtained, a near-optimal 
estimate of the state vector $\vec{\psi}^t$ can be obtained using the EKF. 
In order to compute the sample means $\vec{y}^t$, however, one needs to first 
estimate the following quantities:
\begin{enumerate}
\item The unknown hyperparameters $(\vec{\mu}^1,\Gamma^1,\Sigma^t,\Gamma^t)$ 
of the state-space model \eqref{eq:LinearSys} and \eqref{eq:Obs_model_SBM}.
\label{item:SsmParam}

\item The vector of class memberships $\vec{c}^t$.
\label{item:ClassMem}

\item The matrix of scaling factors $\Xi^t$.
\label{item:ScaleFac}
\end{enumerate}

Methods for estimating items \ref{item:SsmParam} and \ref{item:ClassMem} are 
discussed in \citet{Xu2014a}. 
Item \ref{item:SsmParam} can be addressed using standard methods for 
state-space models, typically alternating 
between state and hyperparameter estimation \citep{Nelson2000}. 
Item \ref{item:ClassMem} is handled by alternating between a local search 
(hill climbing) algorithm to estimate class memberships and the EKF to 
estimate the edge transition probabilities $\Pi^{t|0}$ and $\Pi^{t|1}$.

The main difference between the inference procedures of the HM-SBM and the 
SBTM proposed in this paper involves item \ref{item:ScaleFac}. 
The matrix of scaling factors $\Xi^t$ is a function of the 
marginal edge probabilities at the current and previous times ($\Theta^t$ and 
$\Theta^{t-1}$, respectively) as well as the current probabilities of new and 
existing edges ($\Pi^{t|0}$ and $\Pi^{t|1}$, respectively). 
$\Theta^t$ can be computed from the other three quantities from 
\eqref{eq:ThetaRecurs}. 

I propose to use plug-in estimates of $\Theta^{t-1}$, $\Pi^{t|0}$, and 
$\Pi^{t|1}$ to estimate the scaling matrix $\Xi^t$. 
From property \ref{item:ScaleSameClass} in Section \ref{sec:ModelDef}, 
$\xi_{ij}^t = 1$ for all pairs of nodes that do not change classes between 
time steps. 
Thus it is only necessary to estimate the remaining entries of $\Xi^t$. 
Recall from \eqref{eq:StateDef} that the state vector $\vec{\psi}^t$ 
consists of logits of the probabilities of forming 
new edges $\vec{\pi}^{t|0}$ and the probabilities of existing edges 
re-occurring $\vec{\pi}^{t|1}$. 
Hence $\hat{\vec{\psi}}^{t|t-1}$, the EKF prediction of the state vector at 
time $t$ given observations up to time $t-1$ can be used to compute the 
plug-in estimates $\hat{\Pi}^{t|0}$ and $\hat{\Pi}^{t|1}$. 
The recursion is initialized at time $2$ using the maximum-likelihood 
(ML) estimate $\hat{\Theta}^{1}$ obtained from $W^1$. 
The spectral clustering procedure of \citet{Sussman2012} can be used to 
initialize the class assignments for the local search at time $1$. 
A sketch of the entire inference procedure is shown in Algorithm 
\ref{alg:SbtmInfer}. 

\begin{algorithm}[t]
\caption{SBTM inference procedure}
\label{alg:SbtmInfer}

At time step $1$:

\begin{algorithmic}[1]
\STATE Initialize estimated class assignment using spectral clustering on 
$W^1$

\STATE Compute ML estimates $\hat{\vec{c}}^1$ and $\hat{\Theta}^1$ by 
local search

\STATE Compute predicted state vector $\hat{\vec{\psi}^{2|1}}$ at time step 
$2$ using EKF predict phase
\end{algorithmic}

At time step $t>1$:

\begin{algorithmic}[1]
\STATE Initialize estimated class assignment $\hat{\vec{c}}^t \leftarrow
\hat{\vec{c}}^{t-1}$

\REPEAT[Local search (hill climbing) algorithm]
	\FORALL{neighboring class assignments}
		\STATE Compute plug-in estimate $\hat{\Xi}^t$ of scaling matrix using 
		$\hat{\Theta}^{t-1}$, EKF predicted state $\hat{\vec{\psi}}^{t|t-1}$, 
		and current class assignment
		
		\STATE Compute plug-in estimate $\hat{\vec{y}}^t$ of sample means 
		using $\hat{\Xi}^t$, $W^t$, and current class assignment
		
		\STATE Compute estimate $\hat{\vec{\psi}}^{t|t}$ of state vector using 
		EKF update phase
	\ENDFOR
\UNTIL{reached local maximum of posterior density}

\STATE Compute predicted state vector $\hat{\vec{\psi}^{t+1|t}}$ at time step 
$t+1$ using EKF predict phase

\end{algorithmic}
\end{algorithm}

\section{Experiments}
\label{sec:Expts}

\subsection{Simulated Networks}
In this experiment I generate synthetic networks in a manner similar to 
a simulation experiment in \citet{Yang2011} and \citet{Xu2014a}, except with 
the stochastic block transition model rather than the hidden Markov 
stochastic block model. 
The network consists of $128$ nodes initially split into $4$ classes of $32$ 
nodes each. 
The edge probabilities for blocks at the initial time step are chosen 
to be $\theta_{aa}^1 = 0.2580$ and $\theta_{ab}^1 = 0.0834$ for 
$a,b = 1, 2, 3, 4; a \neq b$. 
The mean $\vec{\mu}^1$ is chosen such that $\pi_{aa}^{1|0} = 0.1$, 
$\pi_{ab}^{1|0} = 0.05, a \neq b$, $\pi_{aa}^{1|1} = 0.7$, and 
$\pi_{ab}^{1|1} = 0.45, a \neq b$. 
The covariance $\Gamma^1$ for the initial state is 
chosen to be a scaled identity 
matrix $0.04I$. 
The state vector $\vec{\psi}^t$ evolves according to a Gaussian random walk 
model, i.e.~$F^t = I$ in \eqref{eq:LinearSys}. 
$\Gamma^t$ is constructed such that $\gamma_{ii}^t = 0.01$ and 
$\gamma_{ij}^t = 0.0025$ for $i \neq j$. 
$10$ time steps are generated, and at each time step, $10\%$ of the nodes are 
randomly selected to 
leave their class and are randomly assigned to one of the other three classes. 
For consistency with \citet{Yang2011} and \citet{Xu2014a}, I generate 
undirected graph snapshots in this experiment. 

\begin{figure}[tp]
\centering
\subfloat[True classes and scaling] 
	{\label{fig:QQ128TrueTrue}
	\includegraphics[width=2.3in]{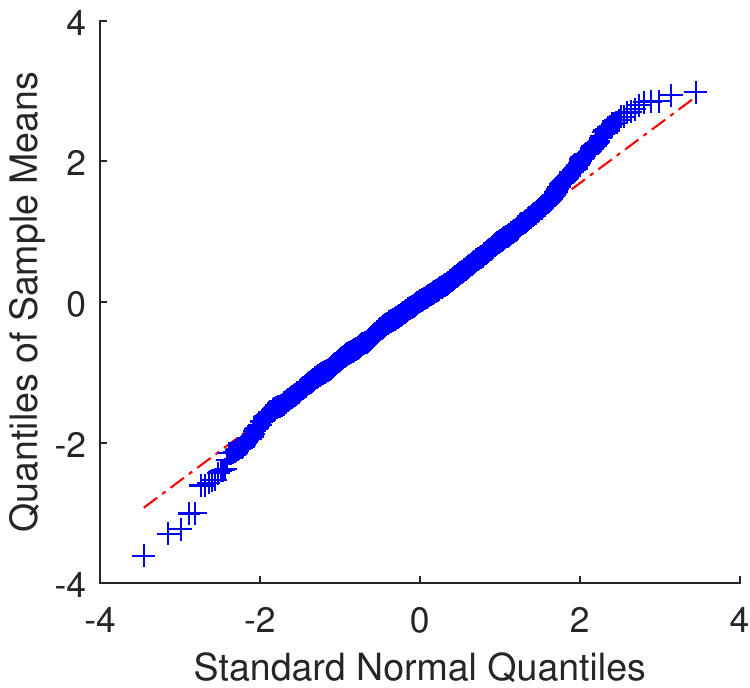}}
\quad
\subfloat[True classes, estimated scaling] 
	{\label{fig:QQ128TrueEst}
	\includegraphics[width=2.3in]{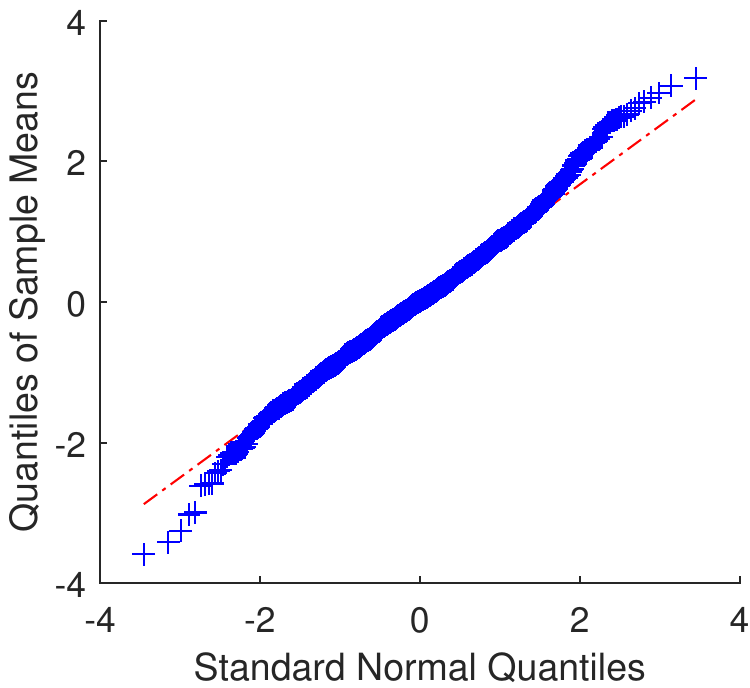}}
\quad
\subfloat[Estimated classes and scaling] 
	{\label{fig:QQ128EstEst}
	\includegraphics[width=2.3in]{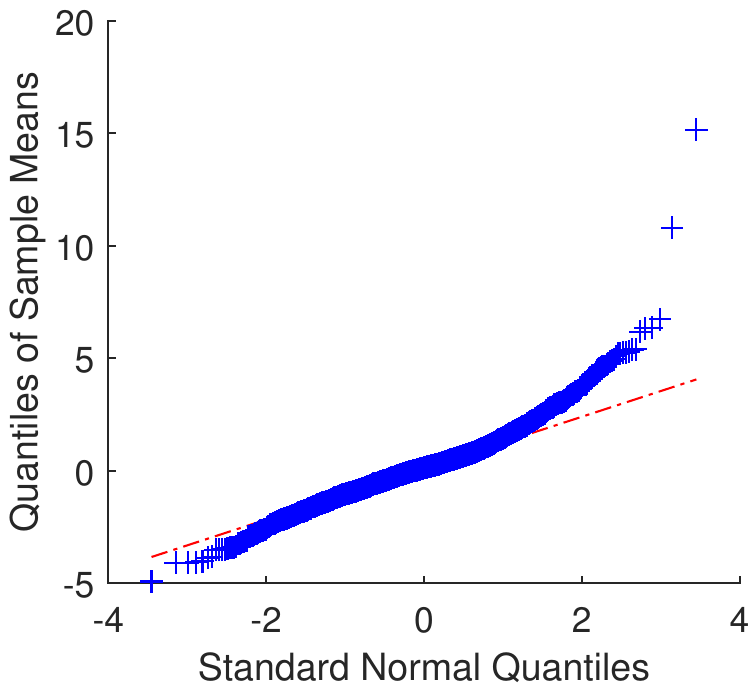}}
\caption[Q-Q plots of standardized sample means for $10$ runs of the simulated 
networks experiment under three levels of estimation] 
{Q-Q plots of standardized sample means $\vec{y}^t$ for $10$ runs of the 
simulated networks experiment under three levels of estimation. 
With \subref{fig:QQ128TrueTrue} true classes and scaling factors, 
$\vec{y}^t$ is close to the asymptotic Gaussian distribution predicted 
by Theorem \ref{thm:AsyGauss}. 
Even with \subref{fig:QQ128TrueEst} estimated scaling factors, 
$\vec{y}^t$ is still close to the asymptotic Gaussian distribution. 
When \subref{fig:QQ128EstEst} class memberships are also estimated, 
$\vec{y}^t$ is heavier tailed due to the errors in the estimated 
classes.}
\label{fig:QQ128}
\end{figure}

I begin by checking the validity of the asymptotic Gaussian distribution of 
the scaled sample means $\vec{y}^t$. 
In this simulation experiment, the population means and standard deviations 
for $\vec{y}^t$ are known and are used to standardize $\vec{y}^t$. 
Q-Q plots for the standardized $\vec{y}^t$ are shown in Figure 
\ref{fig:QQ128}. 
Figure \ref{fig:QQ128TrueTrue} shows the distribution of $\vec{y}^t$ when 
both the true classes and true scaling factors (calculated using the true 
states) are used. 
Notice that the empirical distribution is close to the asymptotic Gaussian 
distribution, with only slightly heavier tail. 
Experimentally I find that this deviation decreases as the block sizes 
increase, as one would expect from Theorem \ref{thm:AsyGauss}. 

Figure \ref{fig:QQ128TrueEst} shows that the distribution of $\vec{y}^t$ 
is roughly the same when using estimated scaling factors along with the 
true classes, which is an encouraging result and suggests that the EKF-based 
inference procedure would likely work well in the a priori block model 
setting. 
Figure \ref{fig:QQ128EstEst} shows that the distribution of $\vec{y}^t$ 
when using both estimated scaling factors and classes is significantly more 
heavy-tailed. 
Since this is not seen in Figure \ref{fig:QQ128TrueEst}, I conclude that it 
is due to errors in the class estimation, which causes the distribution 
of $\vec{y}^t$ to deviate from the asymptotically Gaussian distribution when 
using true classes. 
The heavier tails suggest that perhaps a more robust filter, such as a 
filter that assumes Student-t distributed observations, may provide more 
accurate estimates in the a posteriori setting.

\begin{figure}[t]
\centering
\includegraphics[width=3.5in]{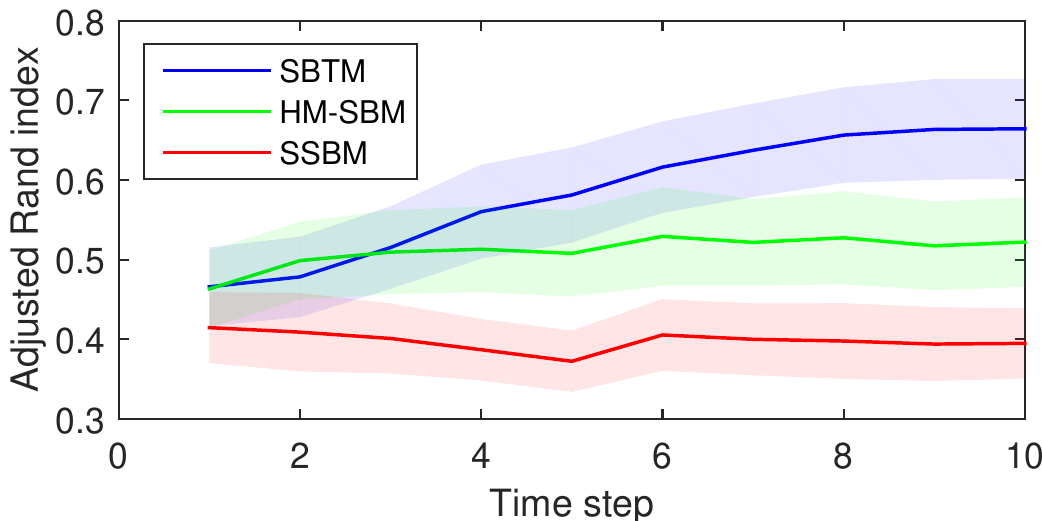}
\caption{Adjusted Rand indices with $95\%$ confidence bands for the stochastic 
block transition model (SBTM), hidden Markov stochastic block model (HM-SBM), 
and static stochastic block model (SSBM) on $50$ runs of the simulated 
networks experiment.}
\label{fig:SimRand}
\end{figure}

Figure \ref{fig:SimRand} shows a comparison of the class estimation 
accuracies, measured by the adjusted Rand indices \citep{Hubert1985}, 
of three different inference algorithms: the EKF-based algorithm 
for the SBTM proposed in this paper, the EKF algorithm for the HM-SBM 
\citep{Xu2014a}, and a static SBM fit using spectral clustering on each 
snapshot. 
As one might expect, the static SBM approach does not improve as more 
time snapshots are provided. 
The poorer performance of the HM-SBM approach compared to the proposed SBTM 
approach is also not too surprising since the dynamics on the marginal block 
probabilities no longer follow a dynamic linear system as assumed by 
\citet{Xu2014a}. 
The SBTM approach is more accurate than the other two; however it still 
makes enough mistakes to cause the heavier-tailed distribution of 
$\vec{y}^t$ as previously discussed. 

\subsection{Facebook Wall Posts}

I now test the proposed SBTM inference algorithm on a real data set, 
namely a dynamic social network of Facebook wall posts \citep{Viswanath2009}. 
Similar to the analysis by \citet{Viswanath2009}, I use $90$-day time 
steps from the start of the data trace in June 2006, with the 
final complete $90$-day interval ending in November 2008, resulting in $9$ 
total time steps.
I filter out people who were active for less than $7$ of the $9$ times as 
well as those with in- or out-degree less than $30$, leaving 
$462$ remaining people (nodes). 

\begin{figure}[tp]
\centering
\subfloat[$t=1$] 
	{\label{fig:FacebookAdj1}
	\includegraphics[width=2.2in]{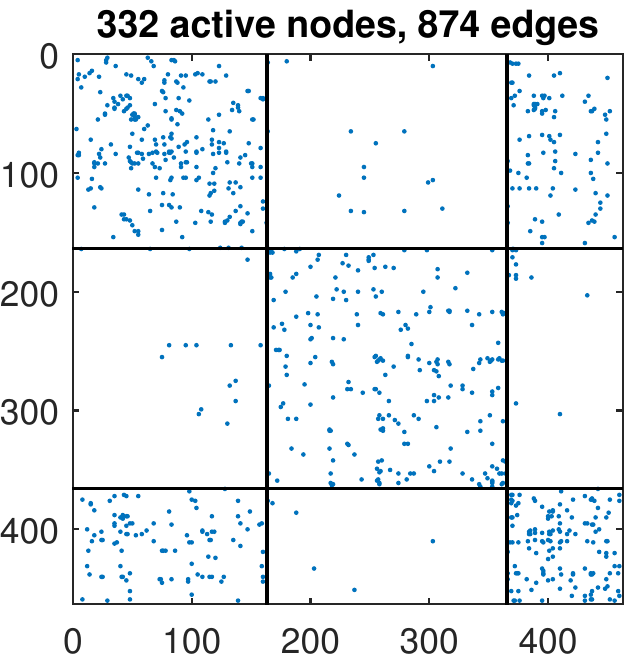}}
\qquad
\subfloat[$t=4$] 
	{\label{fig:FacebookAdj4}
	\includegraphics[width=2.2in]{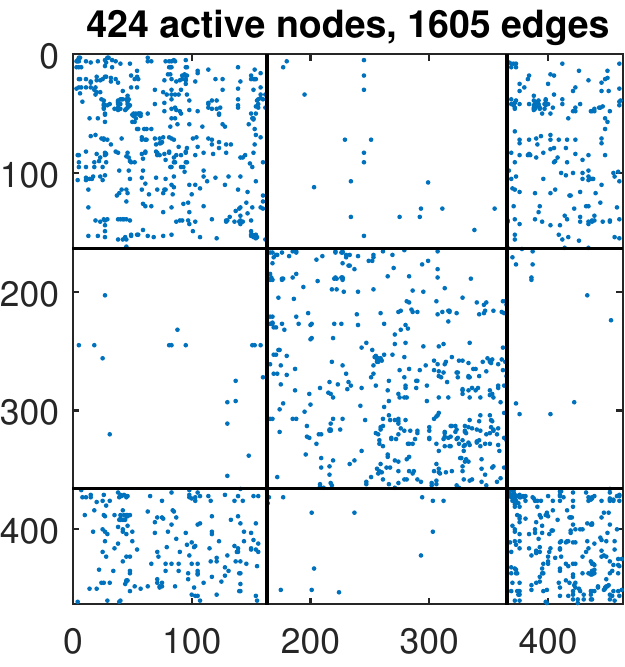}}
\qquad
\subfloat[$t=7$] 
	{\label{fig:FacebookAdj7}
	\includegraphics[width=2.2in]{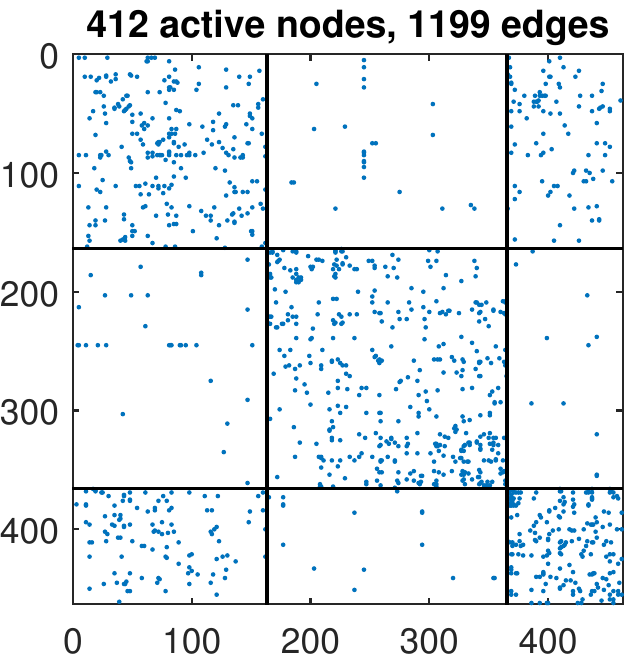}}
\qquad
\subfloat[$t=9$] 
	{\label{fig:FacebookAdj9}
	\includegraphics[width=2.2in]{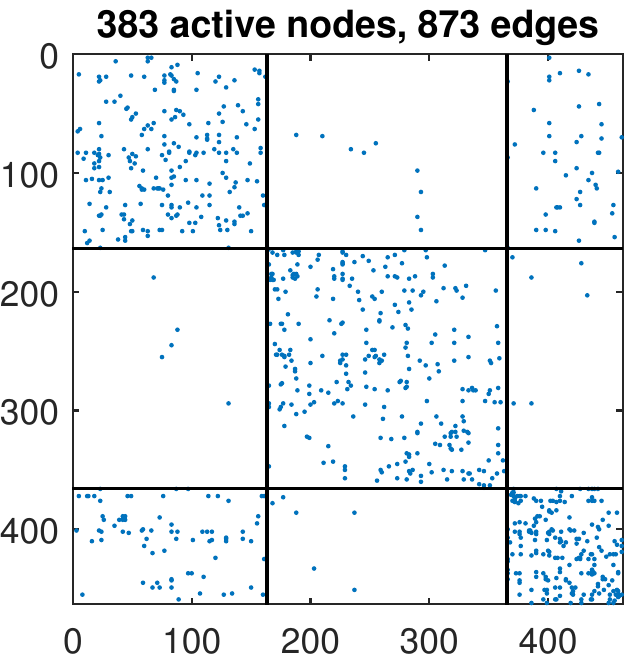}}
\caption{Adjacency matrices at four different time steps constructed from 
Facebook wall posts. 
The estimated classes at the final time $t=9$ are overlaid onto the adjacency 
matrices.}
\label{fig:FacebookAdj}
\end{figure}

I fit the SBTM to this dynamic network using Algorithm \ref{alg:SbtmInfer}, 
beginning with a spectral clustering initialization at the first time step. 
From examination of the singular values of the first snapshot, I choose a 
fit with $k=3$ classes. 
Visualizations of the class structure overlaid onto the adjacency matrices at 
several time steps are shown in Figure \ref{fig:FacebookAdj}. 
Notice that all of the classes are actually communities, with denser diagonal 
blocks compared to off-diagonal blocks. 
The initial snapshot contains only $332$ active nodes, so many new nodes 
enter the network over time. 
The networks are quite sparse, with the densest block having estimated 
marginal edge probability of about $0.08$. 
I find that the estimated probabilities of forming new edges is very low, 
less than $0.03$ over all time steps regardless of block. 
The probabilities of existing edges re-occurring show greater variation 
between blocks, ranging from about $0.18$ to $0.90$. 

\begin{figure}[tp]
\centering
\subfloat[Observed network] 
	{\label{fig:FacebookDurHistObs}
	\includegraphics[width=2.3in]{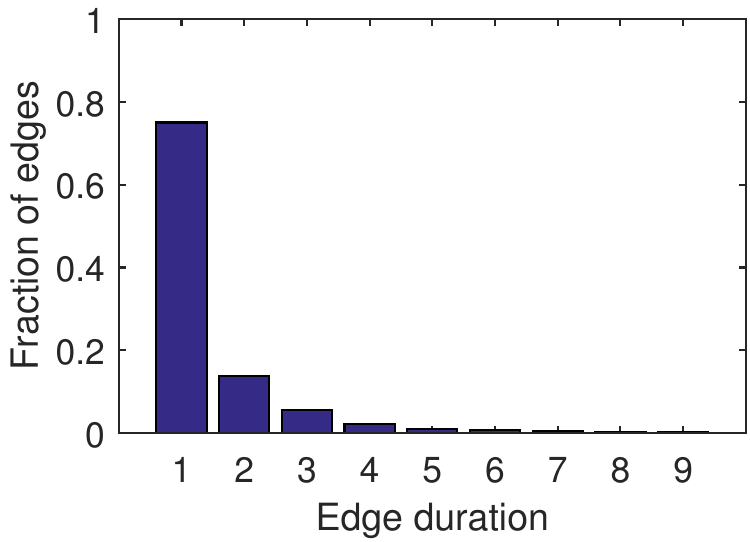}}
\quad
\subfloat[HM-SBM simulated networks] 
	{\label{fig:FacebookDurHistHmSbm}
	\includegraphics[width=2.3in]{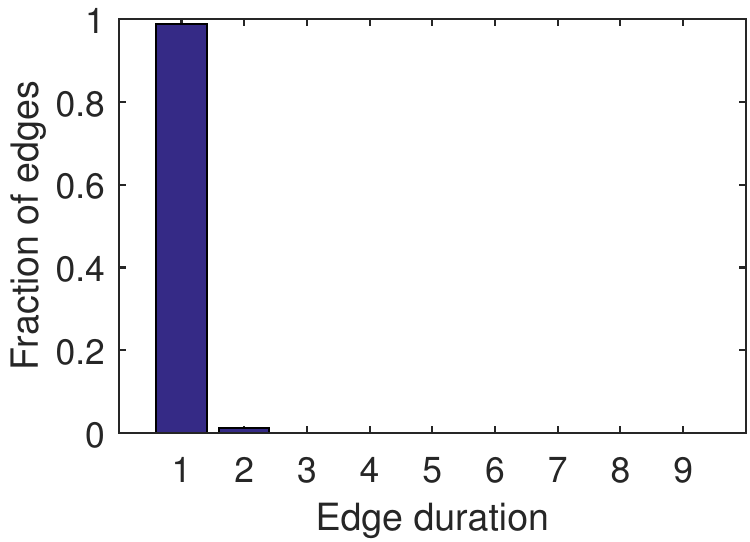}}
\quad
\subfloat[SBTM simulated networks] 
	{\label{fig:FacebookDurHistSbtm}
	\includegraphics[width=2.3in]{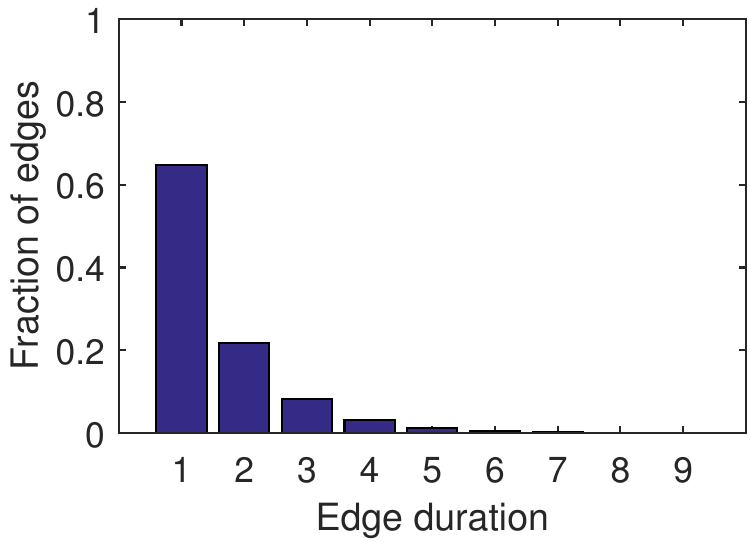}}
\caption[Histograms of edge durations in Facebook network]
{Histograms of edge durations in \subref{fig:FacebookDurHistObs} observed 
Facebook network, \subref{fig:FacebookDurHistHmSbm} 
simulated networks from HM-SBM fit to observed network, and
\subref{fig:FacebookDurHistSbtm} simulated networks from 
SBTM fit to observed network. 
The HM-SBM cannot reproduce the observed edge durations, unlike the SBTM.}
\label{fig:FacebookDurHist}
\end{figure}

A histogram of the edge durations observed in the network is shown in Figure 
\ref{fig:FacebookDurHistObs}. 
Notice that, despite the low densities of the blocks, more than $20\%$ of the 
edges appear over multiple time steps. 
I generate $10$ synthetic networks each from the HM-SBM and SBTM fits to the 
observed networks. 
The histogram of edge durations from synthetic networks generated from the 
HM-SBM is shown in Figure \ref{fig:FacebookDurHistHmSbm}. 
Due to the hidden Markov assumption, only the densities of the blocks are 
being replicated over time, and as such, the majority of edges are not 
repeated at the following time step. 
Compare this to the edge durations generated from the proposed SBTM, shown 
in Figure \ref{fig:FacebookDurHistSbtm}. 
Notice that a significant fraction of edges are indeed repeated in these 
synthetic networks, much like in the observed networks. 
These edge durations cannot be replicated by the HM-SBM. 
Thus the proposed SBTM provides better fits to the sequence of observed 
adjacency matrices and allows it to better forecast future interactions.

Notice also that the edge durations from the synthetic networks are actually 
slightly longer 
than from the observed networks. 
This is an artifact that appears because not all nodes are 
active at all time steps in the observed networks, causing edge durations 
to be shortened in the observed networks. 
One could perhaps replicate this effect by adding a layer to 
the dynamic model simulating nodes entering and leaving the network over time, 
which would be an interesting direction for future work. 

The proposed SBTM can also be extended to have edges depend directly on 
whether edges were present further back than just the previous time step. 
Such an approach would likely improve forecasting ability; 
however, it also increases the number of states that need to be estimated, 
which creates additional challenges that would make for interesting future 
work.

\subsubsection*{Acknowledgements}
The author thanks Prof.~Alan Mislove for providing access to the Facebook 
data analyzed in this paper.

\bibliographystyle{abbrvnat}
\bibliography{library_s}

\end{document}